\newcommand{\bs}[1]{\boldsymbol{#1}}
\newcommand{\ol}[1]{\overline{#1}}
\newcommand{\norm}[1]{\left\lVert#1\right\rVert} 
\newcommand{\setdef}[2][]{
	\left\{
		\ifblank{#1}{}{#1 \hspace{.1cm} \middle| \hspace{.1cm}}
		#2
	\right\}
} 
\newcommand{\lr}[1]{\left(#1\right)} 
\DeclareMathOperator\erf{erf} 
\DeclarePairedDelimiterXPP\onenorm[1]{}\lVert\rVert{_1}{\ifblank{#1}{\:\cdot\:}{#1}} 
\DeclarePairedDelimiterXPP\twonorm[1]{}\lVert\rVert{_2}{\ifblank{#1}{\:\cdot\:}{#1}} 
\newtheorem{theorem}{Theorem}
\newtheorem{assumption}{Assumption}
\newtheorem{definition}{Definition}
\newtheorem{lemma}{Lemma}
\newtheorem{objective}{Objective}
\newacronym{MPC}{MPC}{Model Predictive Control}
\newacronym{MPCa}{MPC algorithm}{MPC algorithm}
\newacronym{RMPC}{RMPC}{Robust Model Predictive Control}
\newacronym{SMPC}{SMPC}{Stochastic Model Predictive Control}
\newacronym{SCMPC}{SCMPC}{Scenario Model Predictive Control}
\newacronym{MILP}{MILP}{Mixed Integer Linear Program}
\newacronym{PIT}{PIT}{Pointwise\hyp{}In\hyp{}Time}
\newacronym{POMDP}{POMDP}{Partially Observable Markov Decision Process}
\newacronym{MDP}{MDP}{Markov Decision Process}
\newacronym{KKT}{KKT}{Karush\hyp{}Kuhn\hyp{}Tucker}
\newacronym{ev}{EV}{ego vehicle}
\newacronym{tv}{TV}{target vehicle}
\newacronym{cog}{CoG}{center of gravity}
\newacronym{ol}{OL}{Open-Loop}
\newacronym{cl}{CL}{Closed-Loop}
\newacronym{ocp}{OCP}{optimal control problem}
\newacronym{pog}{PG}{\textit{Probabilistic Grid}}
\newacronym{bog}{BG}{\textit{Binary Grid}}
\newacronym{lk}{LK}{lane keeping}
\newacronym{lc}{LC}{lane changing}
\newacronym{og}{OG}{Occupancy Grid}
\newacronym{ISS}{ISS}{input-to-state stable}
\newglossaryentry{x}{type=symbols2,
	sort={x},
	dimension={\ensuremath{\mathbb{R}^{\gls{nx}}}},
	name={\ensuremath{\bm{x}}},
	description={State}
}
\newglossaryentry{xt}{type=symbols,
	sort={xt},
	name={\ensuremath{\bm{x}_t}},
	description={State at time step $t$}
}
\newglossaryentry{x0}{type=symbols,
	sort={x0},
	name={\ensuremath{\bm{x}_0}},
	description={Initial state}
}
\newglossaryentry{x1}{type=symbols2,
	sort={x1},
	name={\ensuremath{\bm{x}_1}},
	description={State at time\prediction step $1$}
}
\newglossaryentry{xk}{type=symbols,
	sort={xk},
	name={\ensuremath{\bm{x}_k}},
	description={State at prediction step $k$}
}
\newglossaryentry{xk1}{type=symbols2,
	sort={xk1},
	name={\ensuremath{\bm{x}_{k+1}}},
	description={State at prediction step $k+1$}
}
\newglossaryentry{xt1}{type=symbols2,
	sort={xt1},
	name={\ensuremath{\bm{x}_{t+1}}},
	description={State at time step $t+1$}
}
\newglossaryentry{xtk}{type=symbols2,
	sort={xt1k},
	name={\ensuremath{\bm{x}_{t+k|t}}},
	description={Predicted state for time step $t+k$, at time step $t$}
}
\newglossaryentry{u}{type=symbols,
	sort={u},
	dimension={\ensuremath{\mathbb{R}^{\gls{nu}}}},
	name={\ensuremath{\bm{u}}},
	description={Input}
}
\newglossaryentry{us}{type=symbols,
	sort={u},
	name={\ensuremath{\bm{u}^{\text{s}}}},
	description={Input}
}
\newglossaryentry{ub}{type=symbols,
	sort={u},
	name={\ensuremath{\bm{u}^{\text{b}}}},
	description={Input}
}
\newglossaryentry{ut}{type=symbols,
	sort={ut},
	name={\ensuremath{\bm{u}_t}},
	description={Input at time step $t$}
}
\newglossaryentry{ust}{type=symbols,
	sort={ut},
	name={\ensuremath{\bm{u}^{\text{s}}_t}},
	description={Input at time step $t$}
}
\newglossaryentry{ubt}{type=symbols,
	sort={ut},
	name={\ensuremath{\bm{u}^{\text{b}}_t}},
	description={Input at time step $t$}
}
\newglossaryentry{u0}{type=symbols,
	sort={u0},
	name={\ensuremath{\bm{u}_0}},
	description={Initial input}
}
\newglossaryentry{u1}{type=symbols2,
	sort={u1},
	name={\ensuremath{\bm{u}_1}},
	description={Input at time/prediction step $1$}
}
\newglossaryentry{uk}{type=symbols,
	sort={uk},
	name={\ensuremath{\bm{u}_k}},
	description={Input at prediction step $k$}
}
\newglossaryentry{uk1}{type=symbols2,
	sort={uk1},
	name={\ensuremath{\bm{u}_{k+1}}},
	description={Input at prediction step $k+1$}
}
\newglossaryentry{ut1}{type=symbols2,
	sort={ut1},
	name={\ensuremath{\bm{u}_{t+1}}},
	description={Input at time step $t+1$}
}
\newglossaryentry{utk}{type=symbols2,
	sort={utk},
	name={\ensuremath{\bm{u}_{t+k|t}}},
	description={Predicted input at time step $t+k$}
}
\newglossaryentry{ustk}{type=symbols2,
	sort={utk},
	name={\ensuremath{\bm{u}^{\text{s}}_{t+k|t}}},
	description={Predicted input at time step $t+k$}
}
\newglossaryentry{ubtk}{type=symbols2,
	sort={utk},
	name={\ensuremath{\bm{u}^{\text{b}}_{t+k|t}}},
	description={Predicted input at time step $t+k$}
}
\newglossaryentry{Ut}{type=symbols,
	sort={Ut},
	name={\ensuremath{\bm{U}_t}},
	description={Input at time step $t$}
}
\newglossaryentry{Ust}{type=symbols,
	sort={Ut},
	name={\ensuremath{\bm{U}^{\text{s}}_t}},
	description={Input at time step $t$}
}
\newglossaryentry{Ubt}{type=symbols,
	sort={Ut},
	name={\ensuremath{\bm{U}^{\text{b}}_t}},
	description={Input at time step $t$}
}
\newglossaryentry{ut2}{type=symbols,
	sort={zz},
	name={\ensuremath{\bs{\nu}_t}},
	description={New input variable at time step $t$}
}
\newglossaryentry{w}{type=symbols,
	sort={w},
	dimension={\ensuremath{\mathbb{R}^{\gls{nw}}}},
	name={\ensuremath{\bm{w}}},
	description={Uncertainty}
}
\newglossaryentry{wt}{type=symbols,
	sort={wt},
	name={\ensuremath{\bm{w}_t}},
	description={Uncertainty at time step $t$}
}
\newglossaryentry{w0}{type=symbols,
	sort={w0},
	name={\ensuremath{\bm{w}_0}},
	description={Initial uncertainty}
}
\newglossaryentry{w1}{type=symbols2,
	sort={w1},
	name={\ensuremath{\bm{w}_1}},
	description={Uncertainty at time/prediction step $1$}
}
\newglossaryentry{wk}{type=symbols,
	sort={wk},
	name={\ensuremath{\bm{w}_k}},
	description={Uncertainty at prediction step $k$}
}
\newglossaryentry{wk1}{type=symbols2,
	sort={wk1},
	name={\ensuremath{\bm{w}_{k+1}}},
	description={Uncertainty at prediction step $k+1$}
}
\newglossaryentry{wt1}{type=symbols2,
	sort={wt1},
	name={\ensuremath{\bm{w}_{t+1}}},
	description={Uncertainty at time step $t+1$}
}
\newglossaryentry{wtk}{type=symbols2,
	sort={wtk},
	name={\ensuremath{\bm{w}_{t+k|t}}},
	description={Predicted uncertainty at time step $t+k$}
}
\newglossaryentry{w2}{type=symbols2,
	sort={w},
	dimension={\ensuremath{\mathbb{R}^{\gls{nw2}}}},
	name={\ensuremath{\tilde{\bm{w}}}},
	description={Uncertainty}
}
\newglossaryentry{w2t}{type=symbols,
	sort={wt},
	name={\ensuremath{\tilde{\bm{w}}_t}},
	description={Uncertainty at time step $t$}
}
\newglossaryentry{w20}{type=symbols,
	sort={w0},
	name={\ensuremath{\tilde{\bm{w}}_0}},
	description={Initial uncertainty}
}
\newglossaryentry{w21}{type=symbols2,
	sort={w1},
	name={\ensuremath{\tilde{\bm{w}}_1}},
	description={Uncertainty at time/prediction step $1$}
}
\newglossaryentry{w2k}{type=symbols,
	sort={wk},
	name={\ensuremath{\tilde{\bm{w}}_k}},
	description={Uncertainty at prediction step $k$}
}
\newglossaryentry{w2k1}{type=symbols2,
	sort={wk1},
	name={\ensuremath{\tilde{\bm{w}}_{k+1}}},
	description={Uncertainty at prediction step $k+1$}
}
\newglossaryentry{w2t1}{type=symbols2,
	sort={wt1},
	name={\ensuremath{\tilde{\bm{w}}_{t+1}}},
	description={Uncertainty at time step $t+1$}
}
\newglossaryentry{w2tk}{type=symbols2,
	sort={wtk},
	name={\ensuremath{\tilde{\bm{w}}_{t+k|t}}},
	description={Predicted uncertainty at time step $t+k$}
}
\newglossaryentry{nx}{type=symbols,
	sort={nx},
	name={\ensuremath{n_{\bm{x}}}},
	description={State dimension}
}
\newglossaryentry{nu}{type=symbols,
	sort={nx},
	name={\ensuremath{n_{\bm{u}}}},
	description={Input dimension}
}
\newglossaryentry{nw}{type=symbols,
	sort={nw},
	name={\ensuremath{n_{\bm{w}}}},
	description={Uncertainty dimension}
}
\newglossaryentry{nw2}{type=symbols2,
	sort={nw},
	name={\ensuremath{n_{\tilde{\bm{w}}}}},
	description={Uncertainty dimension}
}
\newglossaryentry{Wlim}{type=symbols,
	sort={W},
	name={\ensuremath{\mathcal{W}}},
	description={Uncertainty bound}
}
\newglossaryentry{Ulim}{type=symbols,
	sort={U},
	name={\ensuremath{\mathcal{U}}},
	description={Input bound}
}
\newglossaryentry{Xlim}{type=symbols,
	sort={X},
	name={\ensuremath{\mathcal{X}}},
	description={State bound}
}
\newglossaryentry{Sigmaw}{type=symbols,
	sort={zz18w},
	name={\ensuremath{\bs{\Sigma}_{\gls{w}}}},
	description={Covariance matrix of uncertainty}
}
\newglossaryentry{Sigmaet}{type=symbols,
	sort={zz18w},
	name={\ensuremath{\bs{\Sigma}^{\bm{e}}_{t}}},
	description={Covariance matrix of uncertainty}
}
\newglossaryentry{Sigmaek}{type=symbols,
	sort={zz18w},
	name={\ensuremath{\bs{\Sigma}^{\bm{e}}_{k}}},
	description={Covariance matrix of uncertainty}
}
\newglossaryentry{Sigmaet1}{type=symbols,
	sort={zz18w},
	name={\ensuremath{\bs{\Sigma}^{\bm{e}}_{t+1}}},
	description={Covariance matrix of uncertainty}
}
\newglossaryentry{Sigmaek1}{type=symbols,
	sort={zz18w},
	name={\ensuremath{\bs{\Sigma}^{\bm{e}}_{k+1}}},
	description={Covariance matrix of uncertainty}
}
\newglossaryentry{Sigmae0}{type=symbols,
	sort={zz18w},
	name={\ensuremath{\bs{\Sigma}^{\bm{e}}_{0}}},
	description={Covariance matrix of uncertainty}
}
\newglossaryentry{sysA}{type=symbols,
	sort={A},
	name={\ensuremath{\bm{A}}},
	description={System matrix}
}
\newglossaryentry{sysB}{type=symbols,
	sort={B},
	name={\ensuremath{\bm{B}}},
	description={Input matrix}
}
\newglossaryentry{sysG}{type=symbols,
	sort={G},
	name={\ensuremath{\bm{G}}},
	description={Uncertainty matrix}
}
\newglossaryentry{Vf}{type=symbols,
	sort={Vf},
	name={\ensuremath{V_{\text{f}}}},
	description={Terminal cost}
}
\newglossaryentry{Nb}{type=symbols,
	sort={Vf},
	name={\ensuremath{N^{\text{b}}}},
	description={Backup MPC horizon}
}
\newglossaryentry{rp}{type=symbols,
	sort={rp},
	name={\ensuremath{\beta}},
	description={Risk parameter}
}
\newglossaryentry{tightening}{type=symbols2,
	sort={zz3},
	name={\ensuremath{\gamma_{\text{cc},k}}},
	description={Constraint tightening}
}
\newglossaryentry{fl_backup}{type=symbols2,
	sort={?},
	name={\ensuremath{\gls{ub}(\gls{xt})}},
	description={MPC feedback law}
}
\newglossaryentry{fl_smpc}{type=symbols2,
	sort={?},
	name={\ensuremath{\gls{us}(\gls{xt})}},
	description={SMPC feedback law}
}
\newglossaryentry{fl_mpc}{type=symbols2,
	sort={?},
	name={\ensuremath{\gls{u}(\gls{xt})}},
	description={MPC feedback law}
}
\newglossaryentry{roa}{type=symbols2,
	sort={X0},
	name={\ensuremath{\mathcal{X}_0}},
	description={Region of attraction}
}
\newglossaryentry{Xf}{type=symbols2,
	sort={X0},
	name={\ensuremath{\mathcal{X}_{\text{f}}}},
	description={Region of attraction}
}
\newglossaryentry{ocp_smpc}{type=symbols,
	sort={Ps},
	name={\ensuremath{\mathbb{P}^{\text{s}}(\gls{xt})}},
	description={SMPC optimal control problem}
}
\newglossaryentry{ocp_backup}{type=symbols,
	sort={Ps},
	name={\ensuremath{\mathbb{P}^{\text{b}}(\gls{xt})}},
	description={Backup MPC optimal control problem}
}
\newglossaryentry{ocp_backup1}{type=symbols2,
	sort={Ps},
	name={\ensuremath{\mathbb{P}^{\text{b}}(\gls{xt1})}},
	description={Backup MPC optimal control problem}
}
\newglossaryentry{ocp_backup_constraint}{type=symbols2,
	sort={Ps},
	name={\ensuremath{\tilde{\mathbb{P}}^{\text{b}}(\gls{xt})}},
	description={Backup MPC optimal control problem with first step constraint}
}
\newglossaryentry{id}{type=symbols,
	sort={I},
	name={\ensuremath{\mathbb{I}}},
	description={Identity matrix}
}
\title{\LARGE \bf
Safe Stochastic Model Predictive Control
}
\author{T. Br\"udigam, R. Jacumet, D. Wollherr, and M. Leibold
\thanks{The authors are with the Chair of Automatic Control Engineering at the Technical University of Munich, Munich, Germany.
{\tt\small \{tim.bruedigam; r.jacumet; dw; marion.leibold\}@tum.de}} 
}
\begin{document}

\maketitle
\thispagestyle{empty}
\pagestyle{empty}

\begin{abstract}
Combining efficient and safe control for safety-critical systems is challenging. Robust methods may be overly conservative, whereas probabilistic controllers require a trade-off between efficiency and safety. In this work, we propose a safety algorithm that is compatible with any stochastic Model Predictive Control method for linear systems with additive uncertainty and polytopic constraints. This safety algorithm allows to use the optimistic control inputs of stochastic Model Predictive Control as long as a safe backup planner can ensure safety with respect to satisfying hard constraints subject to bounded uncertainty. Besides ensuring safe behavior, the proposed stochastic Model Predictive Control algorithm guarantees recursive feasibility and input-to-state stability of the system origin. The benefits of the safe stochastic Model Predictive Control algorithm are demonstrated in a numerical simulation, highlighting the advantages compared to purely robust or stochastic predictive controllers.
\end{abstract}

\section{Introduction}
\label{sec:introduction}

\vspace{-11cm}
\mbox{\small 
This~work~has~been~accepted~to~the~IEEE~2022~Conference~on~Decision~and~Control.}
\vspace{10.2cm}

Designing controllers for safety-critical systems requires considering two major challenges. Safety must be ensured for a system subject to uncertainty, and the controller should reduce conservatism to enable efficient system behavior, i.e., maximizing desired objectives. As it is possible to define safety via input and state constraints, \gls{MPC} is a suitable method to control safety-critical systems subject to uncertainty. 

\gls{RMPC} handles system uncertainty in a robust, but conservative way \cite{BemporadMorari1999}, where tube-based MPC is the most common approach 
\cite{LangsonEtalMayne2004, KoehlerEtalAllgoewer2021}. 
Stability and recursive feasibility guarantees are possible if the uncertainty bound is known initially. RMPC has successfully been applied to safety-critical applications such as automated driving \cite{SolopertoEtalMueller2019}, autonomous racing \cite{WischnewskiEtalLohmann2021}, and robotic manipulation \cite{NubertEtalTrimpe2020}.

\gls{SMPC} reduces conservatism by employing chance constraints \cite{Mesbah2016, FarinaGiulioniScattolini2016}. Chance constraints allow for a small probability of constraint violation, reducing the impact of unlikely worst-case uncertainty realizations. Multiple SMPC approaches exist to determine a tractable reformulation of the probabilistic chance constraint, e.g., analytical reformulations based on normal distributions \cite{FarinaEtalScattolini2015}, sampling based approaches \cite{BlackmoreEtalWilliams2010, SchildbachEtalMorari2014}, affine disturbance feedback approaches \cite{GoulartKerriganMaciejowski2006}, or tube-based approaches \cite{LorenzenEtalAllgoewer2017}. Applications to safety-critical systems mainly focus on automated driving \cite{CarvalhoEtalBorrelli2014, CesariEtalBorrelli2017, BruedigamEtalLeibold2020b, NairEtalBorrelli2021}. However, whereas applying these SMPC approaches yields efficient trajectories, safety is not guaranteed as the chance constraint allows for a non-zero collision probability.

Safety within SMPC is specifically addressed in \cite{BruedigamEtalLeibold2021b} for automated vehicles. This failsafe SMPC approach uses a failsafe backup predictive controller, which guarantees that the next SMPC input may be safely applied, ensuring safe SMPC trajectories for automated driving. Further approaches have recently been proposed to address safety within MPC. A combination of MPC and control barrier functions allows considering safety similarly to how Lyapunov functions are used for stability \cite{ZengZhangSreenath2021, GrandiaEtalHutter2021}. However, guaranteeing recursive feasibility in the presence of uncertainty remains a challenge. An \gls{MPC} approach to minimize constraint violation probability is proposed in \cite{BruedigamEtalLeibold2021c}, but the method is only applicable if norm-based constraints can be employed. In \cite{WabersichZeilinger2021, WabersichEtalZeilinger2021} a predictive safety filter is proposed to guarantee safety in probability for reinforcement learning. This is achieved by enforcing that only those reinforcement learning-based inputs may be applied, which allow for satisfaction of a soft-constrained \gls{ocp}.

In this work, we propose an SMPC safety algorithm for linear systems with additive uncertainty and polytopic constraints. This general safety algorithm significantly extends the approach in \cite{BruedigamEtalLeibold2021b}, which only considered one specific SMPC approach designed for automated vehicles. The safety algorithm of this work guarantees safety (satisfying all constraints) by employing a backup controller, which ensures that applying the first optimized SMPC input allows still finding a safe backup trajectory in the following step. The key contributions are as follows.
\begin{itemize}
\item We provide a safety algorithm compatible with any SMPC for linear systems with additive uncertainty and polytopic constraints. Furthermore, the risk parameter of the SMPC does not influence safety, and no terminal constraint is required in the SMPC \gls{ocp}.
\item We guarantee recursive feasibility of the safety algorithm and, in contrast to \cite{BruedigamEtalLeibold2021b}, we ensure input-to-state stability of the system origin.
\end{itemize}

With the proposed safety algorithm of this work, for a given safety-critical application and based on desired control objectives, the most suitable SMPC approach can be chosen. This choice may be made independently of required properties, which are later ensured by the proposed safety algorithm. The proposed method combines advantages of stochastic and robust predictive control. These advantages of the proposed safe SMPC algorithm are demonstrated in a simulation example, including comparisons to pure SMPC and pure RMPC.

This work is structured as follows. Section~\ref{sec:problem} introduces the problem. The safe SMPC framework and its properties are presented in Sections~\ref{sec:method} and \ref{sec:properties}. A simulation example and conclusive remarks are given in Sections~\ref{sec:results} and \ref{sec:conclusion}.

\textit{Notation:} Regular letters indicate scalars, bold lowercase letters denote vectors, and bold uppercase letters are used for matrices, e.g., $a$, $\bm{a}$, $\bm{A}$, respectively. 
The closed interval between integers $a$ and $b$ is denoted by $\mathbb{I}_{a,b}$. Absolute values and norms are indicated by $|a|$ and $||\bm{a}||$, respectively, where we consider the weighted norm $||\bm{a}||^2_A = \bm{a}^\top \bm{A} \bm{a}$. A function $\gamma$ is of class $\mathcal{K}$ if $\gamma$ is positive definite and strictly increasing. A function $\alpha$ is of class $\mathcal{K}_\infty$ if $\alpha$ is of class $\mathcal{K}$ and unbounded. Within an \gls{ocp}, the state $\bm{x}_{t+k|t}$ denotes the prediction for step $t+k$ obtained at time step $t$. We define the set addition $\mathcal{A} \oplus \mathcal{B} := \setdef[\bm{a}+\bm{b}]{\bm{a} \in \mathcal{A}, \bm{b} \in \mathcal{B}}$ and set subtraction $\mathcal{A} \ominus \mathcal{B} := \setdef[\bm{x} \in \mathbb{R}^n]{\{ \bm{x}\} \oplus \mathcal{B} \subseteq \mathcal{A}}$.
 
\section{Problem Setup}
\label{sec:problem}

We consider a linear, discrete-time system
\begin{IEEEeqnarray}{c}
\gls{xt1}  = \gls{sysA}\gls{xt}  + \gls{sysB}\gls{ut} + \gls{sysG}\gls{wt} = \bm{f}(\gls{xt},\gls{ut},\gls{wt}) \label{eq:sys}
\end{IEEEeqnarray}
with states $\gls{xt} \in \glsd{x}$, inputs $\gls{ut} \in \glsd{u}$, and uncertainties $\gls{wt} \in \glsd{w}$ at time step $t$, as well as the known matrices \gls{sysA}, \gls{sysB}, and \gls{sysG} with appropriate dimensions. System~\eqref{eq:sys} is subject to input constraints $\gls{ut} \in \gls{Ulim}$ and state constraints $\gls{xt} \in \gls{Xlim}$.

\begin{assumption}[Uncertainty]
\label{ass:uncertainty}
The uncertainty \gls{wt} is independent and identically distributed and bounded by $\gls{wt} \in \gls{Wlim}$.
\end{assumption}

The general task is to drive the state of system~\eqref{eq:sys} to the origin while keeping inputs low. In MPC, this is achieved by repeatedly solving an \gls{ocp} $\mathbb{P}(\gls{xt})$, i.e.,
\begin{IEEEeqnarray}{rll}
\IEEEyesnumber \label{eq:ocp_mpc}
\min_{\gls{Ut}}~& J(\gls{xt},\gls{Ut}) \IEEEyessubnumber\\
\textnormal{s.t. }& \bm{x}_{t+k+1|t}  = \bm{f}(\gls{xtk},\gls{utk},\gls{wtk})
\IEEEyessubnumber \IEEEeqnarraynumspace\\
& \gls{utk} \in \gls{Ulim},~~&\hspace{-35mm} k \in \gls{id}_{0,N-1} \IEEEyessubnumber\\
&\gls{xtk} \in \gls{Xlim},~~&\hspace{-35mm} k \in \gls{id}_{1,N} \IEEEyessubnumber \label{eq:xx_ocp}
\end{IEEEeqnarray}
with the finite input sequence $\gls{Ut} = (\bm{u}_{t|t}, ..., \bm{u}_{t+N-1|t})$ and the objective function
\begin{IEEEeqnarray}{c}
J(\gls{xt},\gls{Ut}) = \sum_{k=0}^{N-1} l\lr{\gls{xtk}, \gls{utk}} + \gls{Vf}(\gls{x}_{t+N|t})  \label{eq:cost_gen}
\end{IEEEeqnarray}
with prediction horizon $N$, stage cost $l$, and the terminal cost function \gls{Vf}. At each time step $t$, the first element $\gls{ut}=\bm{u}_{t|t}$ is applied to the system. This may be expressed as a control law $\gls{ut} = \gls{fl_mpc}$. 

Considering uncertainty in the state constraint \eqref{eq:xx_ocp} may lead to overly conservative results. This conservatism is reduced by chance constraints of the form
\begin{IEEEeqnarray}{c}
\mathrm{Pr}\lr{\gls{xtk} \in \gls{Xlim}} \geq \gls{rp}   \label{eq:cc}
\end{IEEEeqnarray}
with risk parameter \gls{rp}. Replacing robust constraints in RMPC by chance constraints yields an \gls{SMPC} \gls{ocp}. However, while conservatism is reduced, a small probability of constraint violation is allowed, i.e., the lower the risk parameter \gls{rp}, the higher the probability of constraint violations. 

\subsection{Property Definitions}

If \gls{MPC} is employed in safety-critical applications, three properties are required. First, safety must be ensured. Second, if the \gls{MPC} \gls{ocp} is feasible at a time step, a solution must also exist at the next time step, known as recursive feasibility of the \gls{ocp}. Third, the closed-loop system behavior must be stable. Definitions to ensure these properties are given in the following.

\begin{definition}[Safety]\label{def:safety}
The state $\bm{x}_{t_0}$ of system~\eqref{eq:sys} is safe at time step $t_0$ if it is guaranteed that there exist inputs $\bm{u}_t,~t\geq t_0$ such that the constraints $\gls{ut} \in \gls{Ulim}$ and $\gls{xt} \in \gls{Xlim}$ are satisfied for all $t \geq t_0$. 
\end{definition}

\begin{definition}[Safe Input Sequence]\label{def:safeinputs}
An input sequence $\gls{Ut} = (\gls{ut}, ..., \bm{u}_{t+N-1})$, $\bm{u}_{t+i} \in \gls{Ulim}~\forall i=\gls{id}_{0,N-1}$ is safe for system~\eqref{eq:sys} if consecutively applying the individual input elements yields the safe state sequence $(\gls{xt1}, ..., \bm{x}_{t+N})$ with individual safe states $\bm{x}_{t+i}~\forall i=\gls{id}_{1,N}$ according to Definition~\ref{def:safety}.
\end{definition}

\begin{definition}[Recursive Feasibility]\label{def:recfeas}
The \gls{MPC} \gls{ocp} \eqref{eq:ocp_mpc} for system~\eqref{eq:sys} is recursively feasible if the existence of an admissible solution $\gls{Ut}$ implies the existence of an admissible solution $\bm{U}_{t+1}$ for all $t \in \mathbb{N}$. 
\end{definition}

\begin{definition}[Robustly Positively Invariant Set]
A set \gls{roa} is robustly positively invariant for a system $\bm{f}(\gls{x},\gls{w})$ if $\bm{f}(\gls{x},\gls{w}) \in \gls{roa}$ for all $\gls{x} \in \gls{roa}$ and all $\gls{w} \in \gls{Wlim}$.
\end{definition}

\begin{definition}[Input-to-State Stability \cite{GoulartKerriganMaciejowski2006}]\label{def:ISS}
The origin of a system $\bm{f}(\gls{x}, \gls{w})$ is \gls{ISS} with region of attraction $\gls{roa} \subseteq \glsd{x}$ that contains the origin if \gls{roa} is robust positively invariant and if there exist a continuous function $V:\gls{roa}\rightarrow\mathbb{R}_{\geq 0}$ and functions $\alpha_1, \alpha_2, \alpha_3 \in \mathcal{K}_{\infty}$, $\gamma \in \mathcal{K}$ such that for all $\gls{x} \in \gls{roa}$ and $\gls{w} \in \gls{Wlim}$
\begin{IEEEeqnarray}{c}
\IEEEyesnumber
\alpha_1(||\gls{x}||) \leq V(\gls{x}) \leq  \alpha_2(||\gls{x}||) \label{eq:ISS_req1} \IEEEyessubnumber \\
V(\bm{f}(\gls{x}, \gls{w})) - V(\gls{x}) \leq - \alpha_3(||\gls{x}||) + \gamma(||\gls{w}||).  \label{eq:ISS_req2} \IEEEyessubnumber \IEEEeqnarraynumspace
\end{IEEEeqnarray}
\end{definition}
Then, function $V$ is called an \gls{ISS} Lyapunov function. If the origin of a system is \gls{ISS}, it is guaranteed that the change in $V$ is bounded as long as the uncertainty is bounded. If the uncertainty is zero, the origin of an \gls{ISS} system is asymptotically stable with region of attraction \gls{roa}.

\subsection{Problem Statement}

The aim of this work is to design a general \gls{SMPC} algorithm that exploits the advantage of reduced conservatism in \gls{SMPC} while ensuring the previously described properties.

\begin{objective}\label{prob:prob1}
The \gls{SMPC} algorithm \eqref{eq:ocp_mpc} for system~\eqref{eq:sys}, where the chance constraint \eqref{eq:cc} replaces the hard constraint~\eqref{eq:xx_ocp}, must maximize the control objective \eqref{eq:cost_gen} while guaranteeing safety (Definition~\ref{def:safety}), recursive feasibility (Definition~\ref{def:recfeas}), and stability (Definition~\ref{def:ISS}).
\end{objective}

In the following, we propose an \gls{SMPC} algorithm including a safe backup (predictive) controller that ensures satisfaction of all requirements listed in Objective~\ref{prob:prob1}.
\section{Safe Stochastic MPC}
\label{sec:method}

\subsection{General Safe SMPC Algorithm}\label{sec:framework}

\gls{SMPC} allows for a certain probability of constraint violation. Therefore, in order to use \gls{SMPC} in a safe way, it needs to be ensured that applying an \gls{SMPC} input is safe. 

We propose a general safe SMPC algorithm that consists of an \gls{SMPC} part and a backup predictive controller. This safe SMPC algorithm, shown in Figure~\ref{fig:method}, yields an input $\bm{u}_t$ at each time step $t$, which is determined based on the following two modes:
\begin{itemize}
    \item Stochastic mode (with \gls{ocp} \gls{ocp_smpc})
    \item Backup mode (with \gls{ocp} \gls{ocp_backup})
\end{itemize}
We now present details on the two \glspl{ocp} and on which mode to apply.
\begin{figure}
\vspace{1mm}
\centering
\includegraphics[width = 0.99\columnwidth]{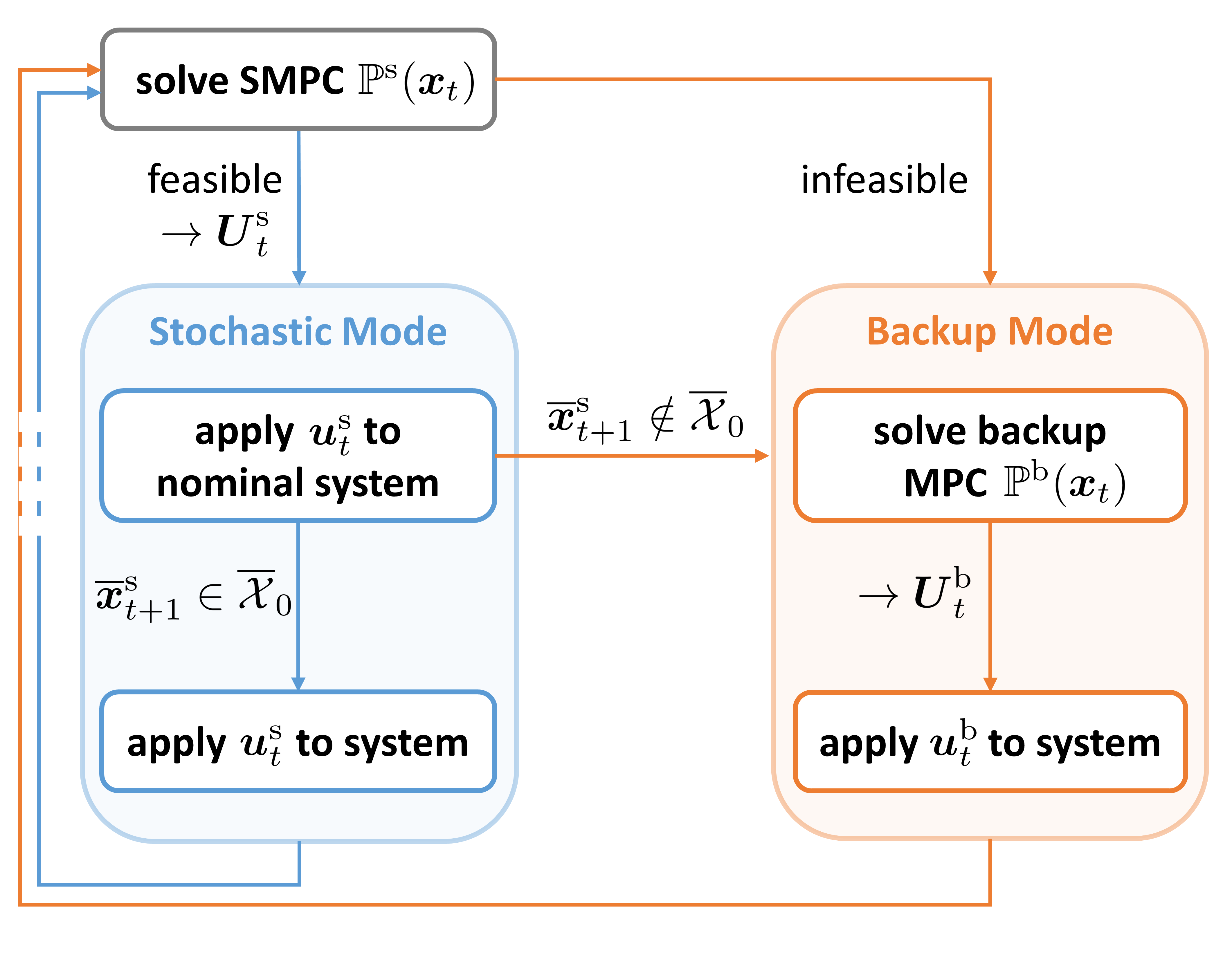}
\caption{Safe \gls{SMPC} algorithm.} 
\label{fig:method}
\end{figure}

\subsubsection{SMPC Optimal Control Problem}

We consider the general \gls{SMPC} \gls{ocp} \gls{ocp_smpc} with horizon $N$ given by
\begin{IEEEeqnarray}{rll}
\IEEEyesnumber \label{eq:ocp_smpc}
\min_{\gls{Ust}}~& J(\gls{xt},\gls{Ust}) \IEEEyessubnumber\\
\textnormal{s.t. }& \bm{x}_{t+k+1}  = \bm{f}(\gls{xtk},\gls{ustk},\gls{wtk}) \IEEEyessubnumber \IEEEeqnarraynumspace\\
& \gls{ustk} \in \gls{Ulim},~~&\hspace{-18mm}k \in \gls{id}_{0,N-1} \IEEEyessubnumber\\
&\mathrm{Pr}\lr{\gls{xtk} \in \gls{Xlim}} \geq \gls{rp},~~&\hspace{-18mm} k \in \gls{id}_{1,N} \IEEEyessubnumber \label{eq:cc_ocp}
\end{IEEEeqnarray}
yielding the optimal input sequence $\gls{Ust}{}^* = (\bm{u}^{\text{s}*}_{t|t}, ..., \bm{u}^{\text{s}*}_{t+N-1|t})$ with the \gls{SMPC} control law $\gls{fl_smpc}= \bm{u}^{\text{s}*}_{t|t}$. Any \gls{SMPC} method may be used to reformulate the chance constraint \eqref{eq:cc_ocp} into a tractable formulation, depending on the uncertainty distribution. 

\subsubsection{Backup MPC Optimal Control Problem}

We consider a backup \gls{MPC} controller with horizon $\gls{Nb}$ and \gls{ocp} \gls{ocp_backup} with cost function $J^{\text{b}}(\gls{xt},\gls{Ubt})$, yielding the optimal cost $J^{\text{b}}{}^*=J^{\text{b}}(\gls{xt},\gls{Ubt}{}^*)$ with input sequence $\gls{Ubt}{}^* = (\bm{u}^{\text{b}*}_{t|t}, ..., \bm{u}^{\text{b}*}_{t+\gls{Nb}-1|t})$, and control law $\gls{fl_backup}= \bm{u}_{t|t}^{\text{b}*}$, resulting in the closed-loop system
\begin{IEEEeqnarray}{c}
\gls{xt1}  = \bm{f}\lr{\gls{xt},\gls{fl_backup},\gls{wt}} \label{eq:sys_cl}
\end{IEEEeqnarray}
for system \eqref{eq:sys}. Various backup controllers are possible in this algorithm, given that they fulfill the following assumption. 

\begin{assumption}[Backup MPC]\label{ass:backup_reqs}
The backup MPC \gls{ocp} \gls{ocp_backup}, with value function $J^{\text{b}}{}^*$ and control law $\gls{fl_backup}$, is chosen such that \gls{ocp_backup} is recursively feasible, $\gls{xt} \in \gls{Xlim}$ and $\gls{ut} \in \gls{Ulim}$ for all $t$, and such that the origin of the closed-loop system \eqref{eq:sys_cl} is ISS with region of attraction \gls{roa}, where \gls{roa} is robust positively invariant for all $\gls{wt} \in \gls{Wlim}$.
\end{assumption}

Various MPC schemes exist that fulfill Assumption~\ref{ass:backup_reqs}, as discussed in Section~\ref{sec:MPCdetails}. 

The safe SMPC algorithm only applies SMPC inputs if it is guaranteed that the backup \gls{ocp} \gls{ocp_backup} may still be solved at the next time step. Applying the first SMPC input \gls{fl_smpc} to the nominal system yields the next nominal state
\begin{equation}
    \overline{\bm{x}}^{\text{s}}_{t+1} =  \gls{sysA}\gls{xt}  + \gls{sysB}\gls{fl_smpc}.
\end{equation}
It is guaranteed that the backup \gls{ocp} is feasible for any first step uncertainty $\gls{wt} \in \gls{Wlim}$ if \begin{equation}
    \overline{\bm{x}}^{\text{s}}_{t+1} \in \gls{roa} \ominus \gls{Wlim} \label{eq:X0minusW}
\end{equation}
where $\ol{\mathcal{X}}_{0} = \gls{roa} \ominus \gls{Wlim}$, which ensures that $\bm{x}^{\text{s}}_{t+1} \in \gls{roa} $.

\subsubsection{Safe SMPC Modes}

We are now able to propose two different modes within the safe \gls{SMPC} algorithm, evaluated at time step $t$.

\paragraph*{Stochastic Mode}
The control law $\bm{u}_t = \gls{fl_smpc}$ is applied if the \gls{SMPC} \gls{ocp} is feasible and if \eqref{eq:X0minusW} is fulfilled, i.e., $\gls{Ust}{}^* \neq \emptyset$ and $ \overline{\bm{x}}^{\text{s}}_{t+1} \in \ol{\mathcal{X}}_{0}$.

\paragraph*{Backup Mode}
If the \gls{SMPC} \gls{ocp} is infeasible or if \eqref{eq:X0minusW} is not satisfied, i.e., $\gls{Ust}{}^* = \emptyset$ or $ \overline{\bm{x}}^{\text{s}}_{t+1} \notin \ol{\mathcal{X}}_{0}$, the backup \gls{MPC} \gls{ocp} is solved and the control law $\bm{u}_t = \gls{fl_backup}$ is applied.

Note that in the stochastic mode, only one \gls{ocp} is solved, while the backup mode requires solving two \glspl{ocp}.

\subsection{MPC Details}\label{sec:MPCdetails}

The proposed algorithm allows considering any \gls{SMPC} approach to solve \eqref{eq:ocp_smpc}, e.g., SMPC with exact chance constraint reformulations based on normal distributions, affine disturbance feedback SMPC, or sampling-based SMPC. A suitable \gls{SMPC} method may be chosen depending on the type of uncertainty and the type of system.

Assumption~\ref{ass:backup_reqs} allows employing various MPC schemes for the backup controller, which enables application of a wider class of backup controllers compared to \cite{BruedigamEtalLeibold2021b}. The most intuitive choice are \gls{RMPC} approaches that guarantee recursive feasibility and stability for a bounded uncertainty. The backup MPC can also be based on other approaches, such as MPC based on reachability analysis \cite{SchuermannKochdumperAlthoff2018} or the failsafe MPC idea described in \cite{BruedigamEtalLeibold2021b}. It is even possible to consider recursively feasible \gls{SMPC} approaches as backup controllers, e.g., \cite{LorenzenEtalAllgoewer2017}, if Assumption~\ref{ass:backup_reqs} may be satisfied.

\section{Properties}
\label{sec:properties}

In the following, we show that the proposed \gls{SMPC} algorithm is recursively feasible, safe, and \gls{ISS}.

\subsection{Recursive Feasibility}

Based on Definition~\ref{def:recfeas}, we first prove recursive feasibility of the \gls{ocp} of the safe SMPC algorithm described in Section~\ref{sec:framework}.

\begin{theorem}[Recursive Feasibility]\label{th:recfeas}
Let Assumptions~\ref{ass:uncertainty} and \ref{ass:backup_reqs} hold and let the system input $\bm{u}_t$ be determined based on the proposed safe \gls{SMPC} algorithm in Section~\ref{sec:framework}. Then, for an admissible $\bm{u}_0$, obtaining a solution $\bm{u}_t$ is feasible for all $t > 0$. 
\end{theorem}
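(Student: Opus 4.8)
The plan is to establish recursive feasibility by induction on $t$, showing that at every time step exactly one of the two modes (stochastic or backup) is available, and that whichever mode was used at step $t$ leaves the state $\bm{x}_{t+1}$ in a configuration from which a mode at step $t+1$ can again be executed. The key structural observation is that the backup mode is \emph{always} executable whenever the current state lies in the region of attraction $\gls{roa}$: by Assumption~\ref{ass:backup_reqs}, the backup \gls{ocp} $\gls{ocp_backup}$ is recursively feasible and keeps the closed-loop state in $\gls{roa}$ for all $\gls{wt}\in\gls{Wlim}$, since $\gls{roa}$ is robustly positively invariant. So the real content is to verify that, no matter which branch the algorithm took at step $t$, we still have $\bm{x}_{t+1}\in\gls{roa}$.

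First I would set up the base case: an admissible $\bm{u}_0$ exists by hypothesis, which (together with $\bm{x}_0$ being a state from which the algorithm runs) places $\bm{x}_0\in\gls{roa}$, so the backup branch is feasible at $t=0$. Then I would do the inductive step. Assume $\bm{x}_t\in\gls{roa}$, so at least the backup mode is feasible and the algorithm produces some $\bm{u}_t$. Two cases:

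\emph{Case 1 (backup mode used at $t$).} Then $\bm{u}_t=\gls{fl_backup}$, and by robust positive invariance of $\gls{roa}$ under the closed-loop backup dynamics \eqref{eq:sys_cl} (Assumption~\ref{ass:backup_reqs}), $\bm{x}_{t+1}=\bm{f}(\bm{x}_t,\gls{fl_backup},\gls{wt})\in\gls{roa}$ for every $\gls{wt}\in\gls{Wlim}$. Hence the backup \gls{ocp} is feasible at $t+1$ and a solution $\bm{u}_{t+1}$ is obtained.

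\emph{Case 2 (stochastic mode used at $t$).} Then by the stochastic-mode condition, $\overline{\bm{x}}^{\text{s}}_{t+1}\in\ol{\mathcal{X}}_0 = \gls{roa}\ominus\gls{Wlim}$. The realized next state is $\bm{x}_{t+1}=\overline{\bm{x}}^{\text{s}}_{t+1}+\gls{sysG}\gls{wt}$; since $\gls{sysG}\gls{wt}$ ranges over (a subset of) $\gls{Wlim}$ and $\overline{\bm{x}}^{\text{s}}_{t+1}\in\gls{roa}\ominus\gls{Wlim}$, the definition of the Pontryagin difference gives $\bm{x}_{t+1}\in\gls{roa}$. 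Again the backup \gls{ocp} is feasible at $t+1$, so a solution $\bm{u}_{t+1}$ exists regardless of whether the stochastic condition holds at $t+1$. This closes the induction, and since a solution $\bm{u}_t$ can be produced for every $t>0$, the algorithm is recursively feasible in the sense of Definition~\ref{def:recfeas}.

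\textbf{The main obstacle} I anticipate is purely a matter of bookkeeping about what $\gls{Wlim}$ denotes in \eqref{eq:X0minusW}: the paper writes $\gls{roa}\ominus\gls{Wlim}$ even though the additive disturbance enters as $\gls{sysG}\gls{wt}$, so one must be careful that $\ol{\mathcal{X}}_0$ is really $\gls{roa}\ominus(\gls{sysG}\gls{Wlim})$ (or that $\gls{sysG}$ is absorbed into the definition of $\gls{Wlim}$) for the set inclusion $\overline{\bm{x}}^{\text{s}}_{t+1}+\gls{sysG}\gls{wt}\in\gls{roa}$ to follow cleanly. Beyond that, the argument is essentially a case split plus invoking Assumption~\ref{ass:backup_reqs}; no quantitative estimates are needed. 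The subtle point worth stating explicitly is that feasibility at $t+1$ is guaranteed \emph{through the backup branch} — the stochastic branch may well be unavailable at $t+1$, but that never breaks recursive feasibility because the backup branch is a permanent fallback whenever the state is in $\gls{roa}$.
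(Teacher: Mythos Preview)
Your proposal is correct and follows essentially the same route as the paper: a two-case split on the active mode, using the Pontryagin-difference condition \eqref{eq:X0minusW} in the stochastic case and robust positive invariance of $\gls{roa}$ (Assumption~\ref{ass:backup_reqs}) in the backup case to conclude $\bm{x}_{t+1}\in\gls{roa}$ and hence feasibility of $\gls{ocp_backup1}$. Your explicit induction framing and the remark that feasibility at $t+1$ is always secured \emph{through the backup branch} make the logic slightly more transparent than the paper's terse version; your caveat about $\gls{sysG}\gls{Wlim}$ versus $\gls{Wlim}$ is a legitimate technical point the paper silently absorbs (note $\gls{sysG}$ is the identity in the numerical example).
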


\begin{proof}
The proof is based on showing that at any time step $t$ it is guaranteed that an admissible input $\bm{u}_t$ is applied at $t$ and an admissible input $\bm{u}_{t^+}$ may be obtained at the next time step $t^+ = t+1$.

If $\gls{Ust}{}^* \neq \emptyset$ and $ \overline{\bm{x}}^{\text{s}}_{t+1} \in \gls{roa} \ominus \gls{Wlim}$, $\bm{u}_t = \gls{fl_smpc}$ exists and $\bm{x}^{\text{s}}_{t+1} \in \gls{roa}$, which guarantees that a solution $\bm{U}^{\text{b}*}_{t+1} $ exists for $\gls{ocp_backup1}$.

In the backup mode, $\bm{U}^{\text{b}*}_{t} $ exists for $\gls{ocp_backup}$ as $\gls{xt} \in \gls{roa}$ and Assumption~\ref{ass:backup_reqs} ensures that $\gls{ocp_backup1}$ remains feasible.

Hence, all possibilities are covered. This holds for all $t \in \mathbb{N}$, i.e., admissible inputs are guaranteed at subsequent time steps, which concludes the proof.
\end{proof}

Note that no terminal constraint is necessary for the SMPC \gls{ocp} to ensure recursive feasibility of the overall algorithm. Based on guaranteed recursive feasibility of the safe \gls{SMPC} algorithm, safety and stability are now discussed.

\subsection{Safety}

We require that the safe \gls{SMPC} algorithm described in Section~\ref{sec:framework} is safe. Based on Definition~\ref{def:safety}, this requirement demands that all constraints are met at all time steps, which we show in the following.

\begin{theorem}[Safety]\label{th:safety}
Let Assumptions~\ref{ass:uncertainty} and \ref{ass:backup_reqs} hold and let the system input $\bm{u}_t$ be determined based on the proposed safe \gls{SMPC} algorithm in Section~\ref{sec:framework}. Then, for a safe initial state $\bm{x}_0$, safety according to Definition~\ref{def:safety} is guaranteed for $t>0$.
\end{theorem}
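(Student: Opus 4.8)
The plan is to leverage recursive feasibility (Theorem~\ref{th:recfeas}) together with Assumption~\ref{ass:backup_reqs} and the mode-switching logic, and to proceed by induction on the time step. The core observation is that in both modes the applied input keeps the true (uncertain) successor state inside the robust positively invariant region of attraction $\gls{roa}$, from which Assumption~\ref{ass:backup_reqs} supplies a safe continuation via the backup controller. First I would recall what ``safe initial state'' must mean here for the statement to be non-vacuous: by Definition~\ref{def:safety}, $\bm{x}_0$ safe means there exist admissible inputs keeping the trajectory in $\gls{Xlim}$ (and inputs in $\gls{Ulim}$) for all $t \ge 0$ under bounded uncertainty; consistent with the algorithm, I would take this to be equivalent to $\bm{x}_0 \in \gls{roa}$ (or to the backup \gls{ocp} being feasible at $\bm{x}_0$), since $\gls{roa}$ is precisely the set on which the backup controller certifies Definition~\ref{def:safety}. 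I would state this reduction explicitly at the start of the proof.

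The inductive step splits into the two cases of the algorithm. \textbf{Backup mode at time $t$} (assuming $\bm{x}_t \in \gls{roa}$): $\gls{ocp_backup}$ is feasible by Assumption~\ref{ass:backup_reqs}, the applied $\bm{u}_t = \gls{fl_backup} \in \gls{Ulim}$, and the closed-loop map $\bm{f}(\gls{xt},\gls{fl_backup},\gls{wt})$ keeps $\bm{x}_{t+1} \in \gls{roa}$ for every $\gls{wt} \in \gls{Wlim}$ because $\gls{roa}$ is robust positively invariant; also $\bm{x}_{t+1} \in \gls{Xlim}$ since $\gls{roa}$ (being the region on which Assumption~\ref{ass:backup_reqs} asserts $\gls{xt}\in\gls{Xlim}$ for all $t$) lies in $\gls{Xlim}$. \textbf{Stochastic mode at time $t$}: here $\gls{Ust}{}^* \neq \emptyset$ and $\overline{\bm{x}}^{\text{s}}_{t+1} \in \ol{\mathcal{X}}_0 = \gls{roa} \ominus \gls{Wlim}$. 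Then $\bm{u}_t = \gls{fl_smpc} \in \gls{Ulim}$ by the input constraint in \eqref{eq:ocp_smpc}, and the true successor $\bm{x}^{\text{s}}_{t+1} = \overline{\bm{x}}^{\text{s}}_{t+1} + \gls{sysG}\gls{wt}$ satisfies $\bm{x}^{\text{s}}_{t+1} \in \gls{roa}$ by the definition of $\ominus$ (this is exactly the consequence noted after \eqref{eq:X0minusW}); since $\gls{roa}\subseteq\gls{Xlim}$, the state constraint holds at $t+1$. In either case $\bm{x}_{t+1} \in \gls{roa}$, closing the induction, and from $\bm{x}_{t+1}\in\gls{roa}$ Definition~\ref{def:safety} applied via the backup controller guarantees a safe continuation — so every visited state is safe and all constraints hold for all $t > 0$.

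I would then assemble the argument: base case $\bm{x}_0 \in \gls{roa}$ by hypothesis; inductive step as above shows $\bm{x}_t \in \gls{roa} \Rightarrow \bm{x}_{t+1} \in \gls{roa}$ and $\bm{u}_t \in \gls{Ulim}$, $\bm{x}_t \in \gls{Xlim}$; hence by induction $\bm{x}_t \in \gls{roa} \subseteq \gls{Xlim}$ and $\bm{u}_t \in \gls{Ulim}$ for all $t$, and each $\bm{x}_t$ is safe in the sense of Definition~\ref{def:safety}. The main obstacle, I expect, is not any computation but the bookkeeping around what $\gls{roa}$ guarantees: Assumption~\ref{ass:backup_reqs} gives recursive feasibility and constraint satisfaction of the backup loop and robust positive invariance of $\gls{roa}$, but one must be careful to justify (i) $\gls{roa} \subseteq \gls{Xlim}$, and (ii) that membership in $\gls{roa}$ is itself a certificate of safety per Definition~\ref{def:safety} — i.e., that the backup controller witnesses the required infinite-horizon admissible input sequence. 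Both follow from Assumption~\ref{ass:backup_reqs} (the clause ``$\gls{xt}\in\gls{Xlim}$ and $\gls{ut}\in\gls{Ulim}$ for all $t$'' applied to the backup closed loop started anywhere in the invariant set $\gls{roa}$), so I would spell this linkage out carefully rather than treat it as immediate, since it is the conceptual crux that makes the stochastic-mode input ``safe'' despite the chance constraint permitting violations.
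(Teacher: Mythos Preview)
Your proposal is correct and follows essentially the same approach as the paper: invoke recursive feasibility (Theorem~\ref{th:recfeas}) so that one of the two modes is always applicable, then case-split on the mode to show $\gls{ut}\in\gls{Ulim}$ and $\gls{xt}\in\gls{Xlim}$, using \eqref{eq:X0minusW} in the stochastic mode and Assumption~\ref{ass:backup_reqs} in the backup mode. Your version is more explicit---you carry $\bm{x}_t\in\gls{roa}$ as the inductive invariant and spell out why $\gls{roa}\subseteq\gls{Xlim}$ and why membership in $\gls{roa}$ certifies Definition~\ref{def:safety}---whereas the paper's proof is a terse paragraph that leaves these linkages implicit, but the argument is the same.
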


\begin{proof}
The proof is based on Theorem~\ref{th:recfeas}; hence, it is guaranteed that one of the two modes is applicable at each time step $t$. In the stochastic mode, $\gls{ut} \in \gls{Ulim}$ and \eqref{eq:X0minusW} ensures that $\gls{xt} \in \gls{Xlim}$ for all $\gls{wt} \in \gls{Wlim}$. The backup mode guarantees, by design, that $\gls{ut} \in \gls{Ulim}$ and $\gls{xt} \in \gls{Xlim}$ for all $\gls{wt} \in \gls{Wlim}$. Hence, in both modes it is guaranteed that $\gls{xt} \in \gls{Xlim}$ and $\gls{ut} \in \gls{Ulim}$, which holds for all $t \in \mathbb{N}$ as the proposed \gls{SMPC} algorithm is recursively feasible.
\end{proof}

As shown, safety is ensured by the backup predictive controller and \eqref{eq:X0minusW}, despite SMPC allowing for constraint violations in the open-loop prediction.

\subsection{Stability}

In MPC, stability is often proved by showing that the value function is decreasing for subsequent time steps, also known as the descent property. These proofs are based on the MPC idea of a moving horizon, where the previously planned input sequence remains valid and only one additional input element is added to the input sequence for the next time step. For the proposed safe \gls{SMPC} algorithm, however, this assumption does not hold. Since switching between different modes is possible, the predicted input and state trajectories may vary at each time step. We tackle this challenge by using an \gls{ISS} result from \cite[Lemma~22]{GoulartKerriganMaciejowski2006}, which we repeat in the following lemma.

\begin{lemma}[Lipschitz ISS Lyapunov Function \cite{GoulartKerriganMaciejowski2006}]\label{lem:ISS_Lipschitz}
Let $\bm{f}: \gls{roa} \times \gls{Wlim} \rightarrow \glsd{x}$ be Lipschitz continuous on $\gls{roa} \times \gls{Wlim}$. Let $\gls{roa} \in \glsd{x}$ contain the origin and be a robust positively invariant set for the function $\bm{f}(\gls{x},\gls{w})$. Let there exist a Lipschitz continuous function $V: \gls{roa} \rightarrow \mathbb{R}_{\geq 0}$ such that for all $\gls{x} \in \gls{roa}$
\begin{IEEEeqnarray}{c}
\IEEEyesnumber
\alpha_1(||\gls{x}||) \leq V(\gls{x}) \leq  \alpha_2(||\gls{x}||)  \IEEEyessubnumber \label{eq:ISS_req1b} \\
V(\bm{f}(\gls{x}, \bm{0})) - V(\gls{x}) \leq - \alpha_3(||\gls{x}||)   \IEEEyessubnumber \IEEEeqnarraynumspace \label{eq:ISS_req2b}
\end{IEEEeqnarray}
with functions $\alpha_1, \alpha_2, \alpha_3 \in \mathcal{K}_{\infty}$. Then, $V$ is an \gls{ISS} Lyapunov function and the origin is \gls{ISS} for system $\bm{f}(\gls{x},\gls{w})$ with region of attraction \gls{roa}.
\end{lemma}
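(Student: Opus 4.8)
The plan is to verify that, under the lemma's hypotheses, the function $V$ satisfies every condition in Definition~\ref{def:ISS}; once that is done, the conclusion (that $V$ is an ISS Lyapunov function and the origin is ISS for $\bm{f}(\gls{x},\gls{w})$ with region of attraction $\gls{roa}$) holds by that very definition. First I would clear the immediate parts: $\gls{roa}$ is robust positively invariant for $\bm{f}(\gls{x},\gls{w})$ and contains the origin by assumption, $V$ is continuous since it is Lipschitz continuous, and the lower/upper bound \eqref{eq:ISS_req1} is precisely the assumed bound \eqref{eq:ISS_req1b}. Thus everything reduces to producing a function $\gamma$ of class $\mathcal{K}$ such that the dissipation inequality \eqref{eq:ISS_req2} holds for all $\gls{x}\in\gls{roa}$ and $\gls{w}\in\gls{Wlim}$.

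The core of the argument is to insert the nominal successor value $V(\bm{f}(\gls{x},\bm{0}))$ and then use Lipschitz continuity to absorb the disturbance term. Fix $\gls{x}\in\gls{roa}$ and $\gls{w}\in\gls{Wlim}$; robust positive invariance gives $\bm{f}(\gls{x},\gls{w})\in\gls{roa}$ and $\bm{f}(\gls{x},\bm{0})\in\gls{roa}$, so $V$ is defined at both points and
$$V(\bm{f}(\gls{x},\gls{w})) - V(\gls{x}) = \big(V(\bm{f}(\gls{x},\gls{w})) - V(\bm{f}(\gls{x},\bm{0}))\big) + \big(V(\bm{f}(\gls{x},\bm{0})) - V(\gls{x})\big).$$
The second bracket is at most $-\alpha_3(\norm{\gls{x}})$ by \eqref{eq:ISS_req2b}. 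For the first bracket, let $L_V$ be a Lipschitz constant of $V$ on $\gls{roa}$ and $L_f$ a Lipschitz constant of $\bm{f}$ on $\gls{roa}\times\gls{Wlim}$ in its disturbance argument; then
$$\big| V(\bm{f}(\gls{x},\gls{w})) - V(\bm{f}(\gls{x},\bm{0})) \big| \leq L_V \norm{\bm{f}(\gls{x},\gls{w}) - \bm{f}(\gls{x},\bm{0})} \leq L_V L_f \norm{\gls{w}}.$$
Defining $\gamma(s) := L_V L_f\, s$, which is positive definite and strictly increasing and hence of class $\mathcal{K}$ (indeed $\mathcal{K}_{\infty}$), I obtain $V(\bm{f}(\gls{x},\gls{w})) - V(\gls{x}) \leq -\alpha_3(\norm{\gls{x}}) + \gamma(\norm{\gls{w}})$, which is exactly \eqref{eq:ISS_req2}. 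With all requirements of Definition~\ref{def:ISS} in place, the claim follows.

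I do not anticipate a genuine obstacle here: this is the standard "Lipschitz ISS Lyapunov function" calculation, and the only thing needing attention is domain bookkeeping — $V$ must be evaluated only on $\gls{roa}$, which robust positive invariance guarantees, and the nominal successor $\bm{f}(\gls{x},\bm{0})$ must be admissible, which is implicit in the hypothesis since \eqref{eq:ISS_req2b} already refers to $\bm{f}(\gls{x},\bm{0})$ (so $\bm{0}\in\gls{Wlim}$). The one mild point worth double-checking is that the joint Lipschitz constant of $\bm{f}$ on $\gls{roa}\times\gls{Wlim}$ dominates its partial Lipschitz constant in $\gls{w}$, which it does, so a valid $L_f$ as used above exists.
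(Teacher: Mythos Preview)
Your argument is correct and is exactly the standard proof of this result: split $V(\bm{f}(\gls{x},\gls{w}))-V(\gls{x})$ through the nominal successor $\bm{f}(\gls{x},\bm{0})$, bound the nominal part by \eqref{eq:ISS_req2b}, and absorb the perturbation part via the two Lipschitz constants to obtain a linear $\gamma\in\mathcal{K}$. There is, however, no proof in the paper to compare against: the lemma is simply quoted from \cite[Lemma~22]{GoulartKerriganMaciejowski2006} and invoked as a black box, so your write-up in fact supplies the argument the paper omits.
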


Lemma~\ref{lem:ISS_Lipschitz} ensures that the origin of a system subject to uncertainty is \gls{ISS} if the undisturbed system is asymptotically stable and the system is Lipschitz continuous with respect to state \gls{x} and uncertainty \gls{w}.

\begin{assumption}[Backup MPC Cost]\label{ass:cost_backup}
The cost function $J^{\text{b}}$ is selected according to \eqref{eq:cost_gen}. The stage cost is chosen as $l\lr{\gls{xtk}, \gls{utk}} =  ||\ol{\bm{x}}_{t+k|t}||^2_{\bm{Q}} +  ||\gls{utk}||^2_{\bm{R}}$ 
with $\bm{Q} = \bm{Q}^\top \succ 0$, $\bm{R} = \bm{R}^\top \succ 0$, and the nominal states $\ol{\bm{x}}_{t+k|t}$. The terminal cost $\gls{Vf}(\gls{x}_{t+N|t})$ is chosen as a Lyapunov function in a terminal set \gls{Xf} for the undisturbed closed-loop system $\gls{xt1} = (\gls{sysA} + \gls{sysB} \bm{K}) \gls{xt}$ such that for all $\gls{xt} \in \gls{Xf}$
\begin{equation}
    \gls{Vf}((\gls{sysA} + \gls{sysB} \bm{K})\gls{xt}) - \gls{Vf}(\gls{xt}) \leq - \gls{xt}^\top(\bm{Q} + \bm{K}^\top \bm{R} \bm{K} ) \gls{xt}
\end{equation}
where $\bm{K}$ is a stabilizing feedback matrix.
\end{assumption}

We can now formulate the \gls{ISS} property of a system controlled with the proposed algorithm.

\begin{theorem}[\textit{ISS for Safe SMPC}]\label{th:ISS}
Let Assumptions~\ref{ass:uncertainty}-\ref{ass:cost_backup} hold and let the system input $\bm{u}_t$ be determined based on the proposed safe \gls{SMPC} algorithm in Section~\ref{sec:framework}. Then, for $\bm{x}_0 \in \gls{roa}$, the origin is \gls{ISS} for system \eqref{eq:sys} and $\bm{x}_t \in \gls{roa},~~t>0$.
\end{theorem}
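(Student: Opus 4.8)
The plan is to apply Lemma~\ref{lem:ISS_Lipschitz} to the closed-loop system induced by the safe SMPC algorithm, using the \emph{backup} value function $J^{\text{b}}{}^*$ as the candidate ISS Lyapunov function $V$. The three ingredients Lemma~\ref{lem:ISS_Lipschitz} demands are: (i) $\gls{roa}$ is robust positively invariant for the closed loop and contains the origin; (ii) $V = J^{\text{b}}{}^*$ is Lipschitz continuous on $\gls{roa}$ and sandwiched between two $\mathcal{K}_\infty$ functions of $\|\gls{x}\|$; and (iii) the \emph{undisturbed} closed loop satisfies the descent inequality \eqref{eq:ISS_req2b}. Robust positive invariance of $\gls{roa}$ and membership $\bm{x}_t\in\gls{roa}$ for $t>0$ follow immediately from Theorem~\ref{th:recfeas} together with Assumption~\ref{ass:backup_reqs} (in the stochastic mode, \eqref{eq:X0minusW} forces $\bm{x}^{\text{s}}_{t+1}\in\gls{roa}$; in the backup mode, Assumption~\ref{ass:backup_reqs} keeps the state in $\gls{roa}$). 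The sandwich bound \eqref{eq:ISS_req1b} is standard: with the positive-definite stage cost of Assumption~\ref{ass:cost_backup}, $\alpha_1(\|\gls{x}\|) = \lambda_{\min}(\bm{Q})\|\gls{x}\|^2$ works as a lower bound, and an upper bound $\alpha_2\in\mathcal{K}_\infty$ exists because $J^{\text{b}}{}^*$ is continuous and $\gls{roa}$ is compact (or by the usual argument bounding the optimal cost by the cost of the terminal-controller rollout near the origin).

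The delicate step — and the reason the lemma is invoked at all — is the descent inequality \eqref{eq:ISS_req2b} for the \emph{mode-switching} closed loop \emph{with zero uncertainty}. Here I would argue by cases on the mode active at state $\gls{x}$ when $\gls{w}=\bm{0}$. \textbf{Backup mode.} If the algorithm is in backup mode at $\gls{x}$, the applied input is $\gls{fl_backup}$ and, by the standard MPC argument using Assumption~\ref{ass:cost_backup} (shift the optimal sequence, append the terminal feedback $\bm{K}$, use the terminal-cost decrease), one gets $J^{\text{b}}{}^*(\bm{f}(\gls{x},\gls{fl_backup},\bm{0})) - J^{\text{b}}{}^*(\gls{x}) \le -\gls{x}^\top \bm{Q}\,\gls{x} =: -\alpha_3(\|\gls{x}\|)$. \textbf{Stochastic mode.} If the algorithm is in stochastic mode, the applied input is $\gls{fl_smpc}$ and the successor nominal state is $\ol{\bm{x}}^{\text{s}}_{t+1}\in\ol{\mathcal{X}}_0\subseteq\gls{roa}$, so $J^{\text{b}}{}^*$ is defined there; but we have no a priori bound on $J^{\text{b}}{}^*(\ol{\bm{x}}^{\text{s}}_{t+1})$ versus $J^{\text{b}}{}^*(\gls{x})$, since the SMPC input is chosen by a different objective. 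This is the main obstacle.

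I see two ways to close the stochastic-mode case, and I would take whichever the paper's setup supports. The cleaner route is to observe that near the origin — more precisely, on a sublevel set of $J^{\text{b}}{}^*$ contained in the region where the stochastic mode can be active — the analysis only needs that the switching still produces a net decrease; one shows that once $\gls{x}$ is close enough to the origin, the stochastic mode's successor also lies in a strictly smaller sublevel set because the chance-constrained feasible set and $\ol{\mathcal{X}}_0$ both contract toward the origin, while far from the origin the descent is dominated by the backup term. The more robust route, matching the style of \cite{GoulartKerriganMaciejowski2006}, is to \emph{not} require pointwise descent at every state but to combine: (a) whenever backup mode fires, $V$ strictly decreases by $\alpha_3(\|\gls{x}\|)$; (b) the stochastic mode is only ever active on the (bounded) set where $\ol{\bm{x}}^{\text{s}}_{t+1}\in\ol{\mathcal{X}}_0$, and on that set $V$ cannot increase beyond a uniform constant — then argue that stochastic-mode steps cannot persist indefinitely without the trajectory entering a neighborhood of the origin, so Lemma~\ref{lem:ISS_Lipschitz}'s hypotheses can be verified on an appropriate invariant sublevel set used as the effective region of attraction. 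Either way, once \eqref{eq:ISS_req1b}–\eqref{eq:ISS_req2b} hold on $\gls{roa}$ and $\bm{f}$ and $V$ are Lipschitz (the dynamics are affine, $V=J^{\text{b}}{}^*$ is the value function of a strictly convex QP, hence Lipschitz on compacta), Lemma~\ref{lem:ISS_Lipschitz} delivers ISS of the origin with region of attraction $\gls{roa}$, and $\bm{x}_t\in\gls{roa}$ for $t>0$ is already in hand from Theorem~\ref{th:recfeas}.
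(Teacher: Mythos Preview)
Your overall architecture is right: use $V=J^{\text{b}}{}^*$ and Lemma~\ref{lem:ISS_Lipschitz}, and invoke Theorem~\ref{th:recfeas}/Assumption~\ref{ass:backup_reqs} for robust positive invariance of $\gls{roa}$. The sandwich bound and Lipschitz continuity arguments are fine.

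The genuine gap is in your treatment of the descent inequality \eqref{eq:ISS_req2b} for the stochastic mode. Lemma~\ref{lem:ISS_Lipschitz} requires the \emph{pointwise} inequality $V(\bm{f}(\gls{x},\bm{0}))-V(\gls{x})\le -\alpha_3(\|\gls{x}\|)$ for \emph{every} $\gls{x}\in\gls{roa}$; neither of your two routes establishes this. Route~1 is a heuristic (``the feasible set contracts toward the origin'') that does not yield a uniform $\alpha_3$ for stochastic-mode states, and Route~2 explicitly abandons pointwise descent in favor of a trajectory argument, which is not what the lemma asks for. In stochastic mode with $\gls{w}=\bm{0}$ the applied input is $\gls{fl_smpc}$, chosen by a different objective, so there is simply no reason $J^{\text{b}}{}^*$ should decrease along that step.

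The paper avoids this obstacle by a change of viewpoint rather than a case split: it writes the applied input as $\gls{ut}=\gls{fl_backup}-\tilde{\bm{u}}_t$, with $\tilde{\bm{u}}_t$ the (bounded) offset between backup and actually applied input, and absorbs $\tilde{\bm{u}}_t$ into an augmented disturbance $\gls{w2t}=(\tilde{\bm{u}}_t^\top,\gls{wt}^\top)^\top$. The closed loop then reads
\[
\bm{f}'(\gls{xt},\gls{w2t})=\gls{sysA}\gls{xt}+\gls{sysB}\gls{fl_backup}+[-\gls{sysB},\gls{sysG}]\,\gls{w2t},
\]
and the \emph{undisturbed} case $\gls{w2t}=\bm{0}$ is precisely the backup closed loop, for which the descent \eqref{eq:ISS_req2b} follows directly from Assumption~\ref{ass:backup_reqs} and Assumption~\ref{ass:cost_backup}. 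Lemma~\ref{lem:ISS_Lipschitz} then applies to $\bm{f}'$ with the enlarged (still bounded) disturbance set. This trick is the missing idea in your proposal: instead of proving descent through the mode switch, push the switch into the disturbance so that ``undisturbed'' always means ``backup input applied''.
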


\begin{proof}
We prove \gls{ISS} by showing that $V(\gls{xt}) = J^{\text{b}}{}^*$ is an \gls{ISS} Lyapunov function for \eqref{eq:sys} where $V(.)$ satisfies \eqref{eq:ISS_req1b} and \eqref{eq:ISS_req2b}. Any input prediction in either of the two modes can be described by $\gls{utk}' = \gls{utk}^{\text{b}} - \tilde{\bm{u}}_{t+k|t},~k \in \gls{id}_{0,N-1}$ where $\tilde{\bm{u}}_{t+k|t}$ represents the offset between the backup MPC input element $\gls{utk}^{\text{b}}$ and the input element obtained in the safe \gls{SMPC} algorithm $\gls{utk}'$. As $\gls{utk}'$ and $\gls{utk}^{\text{b}}$ are bounded, $\tilde{\bm{u}}_{t+k|t}$ is bounded, allowing to define the new bounded uncertainty $\gls{w2tk} = (\tilde{\bm{u}}_{t+k|t}^\top, \gls{wtk}^\top)^\top$. This yields the closed loop system
\begin{IEEEeqnarray}{rl}
\IEEEyesnumber
\bm{f}(\gls{xt},\gls{ut},\gls{wt}) &= \gls{sysA}\gls{xt}  + \gls{sysB}\gls{ut} + \gls{sysG}\gls{wt} \IEEEyessubnumber \\
&= \gls{sysA}\gls{xt}  + \gls{sysB}\gls{fl_backup} -\gls{sysB}\tilde{\bm{u}}_{t} + \gls{sysG}\gls{wt} \IEEEyessubnumber \IEEEeqnarraynumspace \\
&= \gls{sysA}\gls{xt}  + \gls{sysB}\gls{fl_backup} +[-\gls{sysB}, \gls{sysG}] \gls{w2t} \IEEEyessubnumber \IEEEeqnarraynumspace 
\end{IEEEeqnarray}
which can be abbreviated by $\bm{f}'(\gls{xt},\gls{w2t})$.

Function $\bm{f}'$ is continuous and $\bm{f}'(\bm{0},\bm{0}) = 0$. With Assumption~\ref{ass:cost_backup}, it holds that $V(\cdot)$ is positive definite and continuous on \gls{roa}. Hence, based on \cite[Lemma 4.3]{Khalil2002}, functions $\alpha_1, \alpha_2 \in \mathcal{K}_{\infty}$ exist such that $\alpha_1(||\gls{xt}||) \leq V(\gls{xt}) \leq  \alpha_2(||\gls{xt}||)$, i.e., \eqref{eq:ISS_req1b} is fulfilled.

Due to Assumption~\ref{ass:backup_reqs},  $V(\gls{xt}) = J^{\text{b}}{}^*$ is an \gls{ISS} Lyapunov function for the undisturbed system with $\gls{w2t} = \bm{0}$, i.e., $V(\bm{f}'(\gls{xt}, \bm{0})) - V(\gls{xt}) \leq - \alpha_3(||\gls{xt}||)$. With $J^{\text{b}}$ designed according to Assumption~\ref{ass:cost_backup} and a bounded \gls{roa}, $J^{\text{b}}{}^*$ is Lipschitz continuous. Hence, Lemma~\ref{lem:ISS_Lipschitz} is fulfilled and $V(\cdot)$ is an \gls{ISS} Lyapunov function for $\bm{f}'(\gls{xt},\gls{w2t})$ with $\gls{xt} \in \gls{roa}$, i.e., it holds that $V(\bm{f}'(\gls{xt}, \gls{w2t})) - V(\gls{xt}) \leq - \alpha_3(||\gls{xt}||) + \gamma(||\gls{w2t}||)$. Hence, the origin is \gls{ISS} with the safe \gls{SMPC} algorithm. 
\end{proof}

Note that tuning the risk parameter in the \gls{SMPC} \gls{ocp} does not impact recursive feasibility, safety, or stability. This allows choosing a risk parameter that yields the most efficient behavior.

\section{Numerical Results}
\label{sec:results}

We analyze the proposed algorithm in a numerical example, based on \cite{LorenzenEtalAllgoewer2017} and elaborate on the advantages over \gls{SMPC} and \gls{RMPC}. Simulations are carried out in Matlab where the set calculations are done with the Mutli-Parametric Toolbox 3 \cite{HercegEtalMorari2013} and the \gls{MPC} routine is based on \cite{GruenePannek2017}.

\subsection{Simulation Setup}
\label{sec:simusetup}

We consider the discrete-time system
\begin{equation}
    \gls{xt1} = \begin{bmatrix} 1 & 0.0075 \\ -0.143 & 0.996\end{bmatrix} \gls{xt} + \begin{bmatrix} 4.798 \\ 0.115\end{bmatrix} u_t + \begin{bmatrix} 1 & 0 \\ 0 & 1\end{bmatrix} \gls{wt}
\end{equation}
with $\gls{x} = (x_1, x_2)^\top$ and the normally distributed uncertainty $\gls{w} \sim \mathcal{N} \lr{\bm{0}, \gls{Sigmaw}}$, $\gls{wt} \in \gls{Wlim}$ where $\gls{Sigmaw} = \mathrm{diag}(0.06, 0.06)$ and $\gls{Wlim} = \setdef[\gls{wt}]{||\gls{wt}||_{\infty} \leq 0.07}$. The input is bounded by $|u_t| \leq 0.2$ and we employ the state constraint $x_1 \leq 2.8$. Additionally, we define $|x_1| \leq 10$ and $|x_2|\leq 10$ to obtain a bounded set \gls{Xlim}, however, in the following simulation only $x_1 \leq 2.8$ is regarded. The initial state is $\gls{x0} = (-1.3,~3.5)^\top$.

For the \gls{SMPC} \gls{ocp}, we approximate the uncertainty with the non-truncated normal distribution $\gls{w} \sim \mathcal{N} \lr{\bm{0}, \gls{Sigmaw}}$ and split the state into a deterministic and a probabilistic part $\gls{xt} = \bm{z}_t + \bm{e}_t$, yielding an adapted input $u_t = \bm{K} \gls{xt} + \nu_t$ with a stabilizing feedback matrix $\bm{K}$ and the new input decision variable $\nu_t$. The state constraint is considered as the chance constraint $\mathrm{Pr}\lr{x_1 \in 2.8} \geq \gls{rp} $ with $\gls{rp} = 0.8$. The normal distribution \gls{w} allows for the chance constraint reformulation
\begin{IEEEeqnarray}{rl}
\IEEEyesnumber \label{eq:tightening}
x_{1,k} &\leq 2.8 - \gls{tightening} \IEEEyessubnumber \\
\gls{tightening} &= \sqrt{2 [1,0]^\top \gls{Sigmaek} [1,0]} \erf^{-1} (2 \gls{rp} -1) \IEEEyessubnumber
\end{IEEEeqnarray}
with the inverse error function $\erf^{-1}(.)$ and the error covariance matrix 
$\gls{Sigmaek1} = \bs{\Phi} \gls{Sigmaek} \bs{\Phi}^\top + \gls{Sigmaw}$ 
with $\gls{Sigmae0} = \mathrm{diag}(0,0)$ and $\bs{\Phi} = \gls{sysA} + \gls{sysB} \bm{K} $.

For the backup MPC, we use an \gls{RMPC} approach according to \cite{MayneLangson2001}, satisfying Assumption~\ref{ass:backup_reqs}. This approach yields the tightened state constraint $\ol{x}_1 \leq 1.72$ and tightened input constraint $-0.018\leq \ol{u}_t \leq 0.025$. The terminal constraint \gls{Xf} is chosen to be a maximal robust control invariant set.

For \gls{SMPC} and \gls{RMPC}, we employ a sampling time $\Delta t = 0.1$, a horizon $N = \gls{Nb} = 11$, and we use the stabilizing feedback gain $\bm{K} = [-0.29, 0.49]$.

For both the \gls{SMPC} and \gls{RMPC} \gls{ocp}, we use the cost according to Assumption~\ref{ass:cost_backup} with $\bm{Q} = \mathrm{diag}(1,10)$ and $\bm{R}~=~1$. We choose $\gls{Vf}(\bm{x}) = ||\bm{x}||^2_{\bm{Q}_{\text{f}}}$ with $\bm{Q}_{\text{f}} = \begin{bmatrix} 1.91 & -5.06 \\ -5.06 & 39.54\end{bmatrix}$, which satisfies the discrete-time algebraic Riccati equation.

\subsection{Simulation Results}

We first analyze the resulting trajectories of a simulation with zero uncertainty using the proposed safe SMPC algorithm as well as pure RMPC and SMPC, where $\gls{rp} = 0.8$. The pure RMPC and pure SMPC are based on the backup RMPC controller and the SMPC controller described in Section~\ref{sec:simusetup}, respectively. The results are shown in Figure~\ref{fig:results_nocc}.

\begin{figure}
\vspace{1.5mm}
\centering
\includegraphics[width = 0.99\columnwidth]{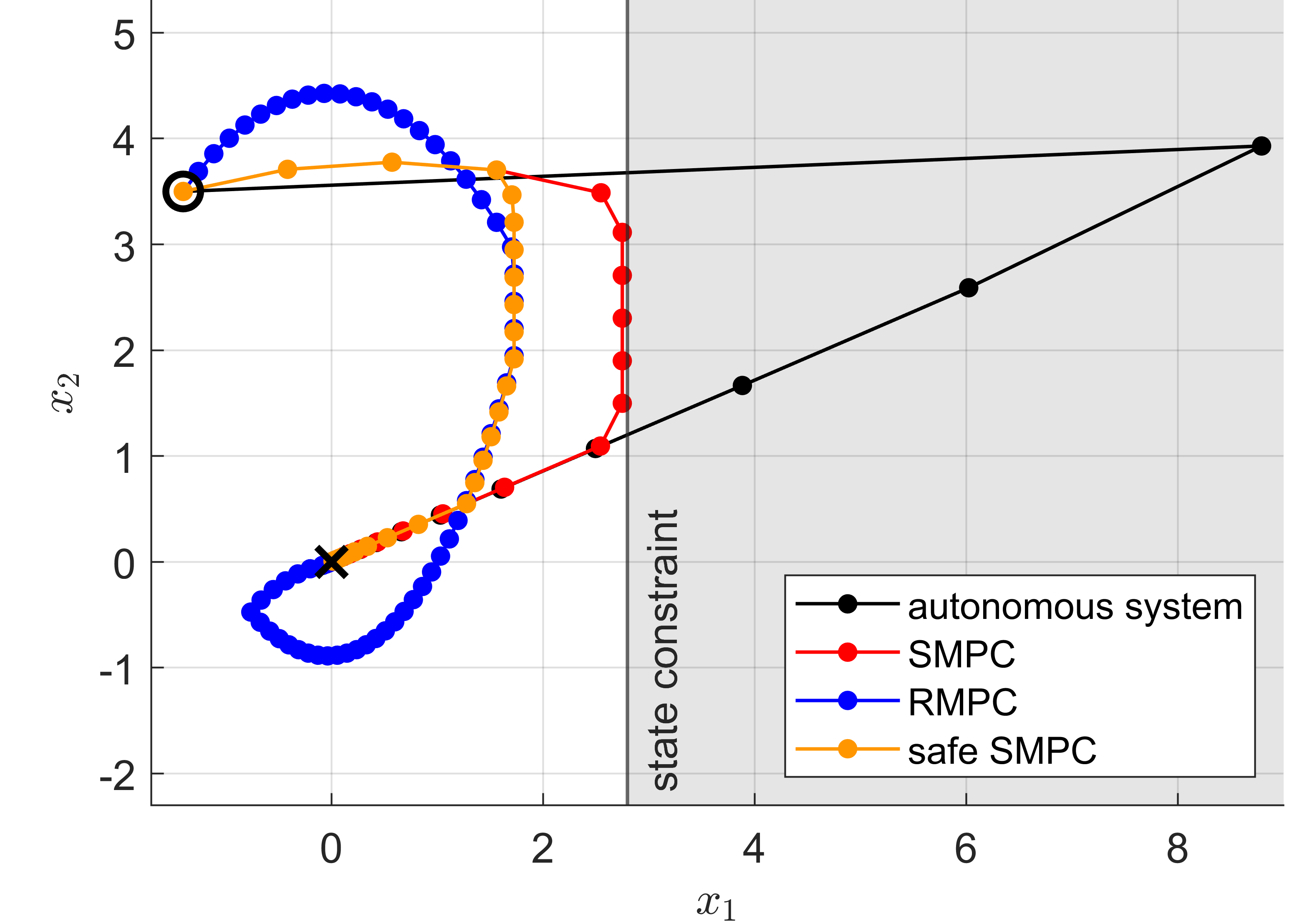}
\caption{Simulation results for the autonomous system, RMPC, SMPC, and safe SMPC, all without uncertainty.} 
\label{fig:results_nocc}
\end{figure}

Whereas the autonomous system with $u_t=0$ violates the constraint $x_1 \leq 2.8$, pure SMPC moves as close towards the state constraint as the constraint tightening \gls{tightening} allows. Considering worst-case uncertainty yields tighter state and input constraints for pure RMPC, requiring more steps to reach the origin. The safe SMPC approach is initially similar to pure SMPC, as $x_1$ is far from the state constraint. Once \eqref{eq:X0minusW} is not satisfied anymore, the safety mechanism is triggered and the backup mode becomes active. 
For the following steps, the input of the safe backup RMPC is applied until it is possible that SMPC inputs satisfy \eqref{eq:X0minusW} again.

The procedure of the safe SMPC approach is illustrated in Figure~\ref{fig:results_safeSMPC}, showing the resulting trajectories of 10 simulation runs subject to uncertainty. 
\begin{figure}
\vspace{1mm}
\centering
\includegraphics[width = 0.94\columnwidth]{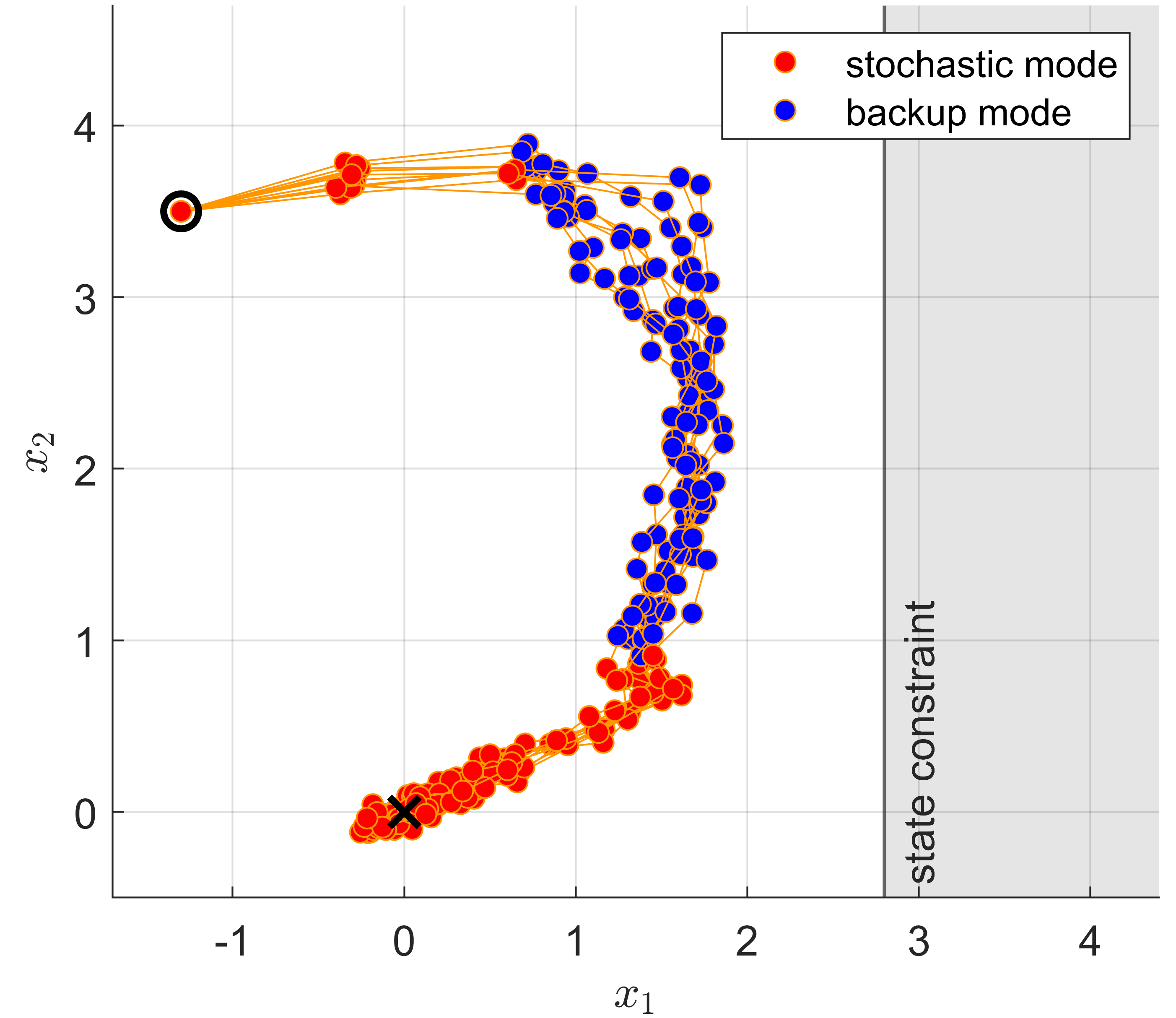}
\caption{Resulting safe SMPC trajectories for 10 simulation runs with uncertainty. 
} 
\label{fig:results_safeSMPC}
\end{figure}
Switching to the backup RMPC inputs ensures that the state constraint is never violated.

Analyzing 100 simulation runs of each safe SMPC, pure RMPC, and pure SMPC with $N_\text{sim} = 80$ simulation steps highlights the advantages of the proposed method. The results are given in Table~\ref{tab:comparison}. The cost is determined by
\begin{IEEEeqnarray}{c}
J_\text{sim} = \sum_{k=1}^{N_\text{sim}} \norm{ \bm{x}_k}^2_{\bm{Q}} + \norm{\bm{u}_{k-1}}^2_{\bm{R}} .\label{eq:overallcost}
\end{IEEEeqnarray}
\begin{table}
\centering
\begin{tabular}{l    c     c }
\toprule
method &  avg. cost & avg. violations per run \\
\midrule
RMPC & $3.56\mathrm{e}3$  & 0 \\
\addlinespace
SMPC & $0.88\mathrm{e}3$ & 0.89 \\
\addlinespace
safe SMPC & $1.13\mathrm{e}3$ & 0 \\
\bottomrule
\end{tabular}
\caption{Comparison.}
\label{tab:comparison}
\end{table}

Whereas SMPC yields the lowest cost, constraint violations occur. RMPC avoids constraint violations but the cost is significantly increased. Safe SMPC guarantees constraint satisfaction with only slightly higher cost compared to SMPC, combining the advantages of SMPC and RMPC. In summary, the proposed algorithm ensures robustness when constraints are active, but allows reducing the conservative behavior of the backup controller when constraints are inactive. 

\section{Conclusion}
\label{sec:conclusion}

The proposed safe SMPC algorithm offers the possibility to combine optimistic SMPC planning with a safety guarantee. In addition, recursive feasibility and stability is guaranteed, without the need of an SMPC terminal constraint. 

The proposed algorithm is not limited to SMPC. Instead of using SMPC, other controllers, e.g., learning-based methods, can be used. This would allow ensuring safety and stability for learning-based controllers.

\section*{Acknowledgement}

The authors thank Francesco Borrelli and Sarah Buhlmann for the collaboration and valuable discussions. This work was supported by a fellowship within the IFI program of the German Academic Exchange Service (DAAD) and the Bavaria California Technology Center (BaCaTeC) grant 1-[2020-2].

\bibliography{./references/Dissertation_bib}
\bibliographystyle{unsrt}

\end{document}